\newtheorem{assumption}{Assumption}
\DeclarePairedDelimiterX{\pmodx}[1]{(}{)}{{\operator@font mod}\mkern6mu#1}
\renewcommand{\pmod}{%
  \allowbreak
  \if@display\mkern18mu\else\mkern8mu\fi
  \pmodx
}
\title{Generalized Implicit Factorization Problem}
\author{Yansong Feng \inst{1,2}\and Abderrahmane Nitaj \inst{3}$^{(\textrm{\Letter})}$\and Yanbin Pan \inst{1,2}$^{(\textrm{\Letter})}$}
\institute{ Key Laboratory of Mathematics Mechanization,   Academy of Mathematics and Systems Science, Chinese Academy of Sciences, Beijing, China \and		School of Mathematical Sciences, University of Chinese Academy of Sciences, Beijing, China \\\email{ysfeng2023@163.com}, \email{panyanbin@amss.ac.cn} \and Normandie Univ, UNICAEN, CNRS, LMNO, 14000 Caen, France \email{abderrahmane.nitaj@unicaen.fr}}
\begin{document}
\maketitle              
\begin{abstract}

The Implicit Factorization Problem (IFP) was first introduced by May and Ritzenhofen at PKC'09, which concerns the factorization of two RSA moduli $N_1=p_1q_1$ and $N_2=p_2q_2$, where $p_1$ and $p_2$ share a certain consecutive number of least significant bits. Since its introduction, many different variants  of IFP have been considered, such as the cases where $p_1$ and $p_2$ share most significant bits or middle bits at the same positions. In this paper, we consider a more generalized case of IFP,  in which the shared consecutive bits can be located  at \textit{any} positions in each prime, not necessarily required to be located at the same positions as before. We propose a lattice-based algorithm to solve this problem  under specific conditions, and also provide some experimental results to verify our analysis.
	
    \keywords{Implicit Factorization Problem \and Lattice \and LLL algorithm \and Coppersmith's algorithm.}
\end{abstract}

\section{Introduction}
In 1977, Rivest, Shamir, and Adleman proposed the famous RSA encryption scheme \cite{rivest1983cryptographic}, whose security is based on the hardness of factoring large integers. RSA is now a very popular scheme with many applications in industry for information security protection. Therefore, its security has been widely analyzed. Although it seems infeasible to break RSA with large modulus entirely with a classical computer now,
there still exist many vulnerable RSA instances.  For instance, small public key \cite{DBLP:conf/eurocrypt/Coppersmith96,DBLP:journals/joc/Coppersmith97} or small secret key \cite{DBLP:conf/eurocrypt/BonehD99} can lead to some attacks against RSA.
In addition, side-channel attacks pose a great threat to RSA \cite{DBLP:conf/ches/BauerJLPR14,DBLP:conf/uss/BrumleyB03,DBLP:conf/host/CarmonSW17}, targeting the decryption device to obtain more information about the private key.

It is well known that additional information on the private keys or the prime factors can help attack the  RSA scheme efficiently.  In 1997, Coppersmith \cite{DBLP:journals/joc/Coppersmith97,DBLP:phd/de/May2003} proposed an attack  that can factor the RSA modulus $N=pq$ in polynomial time if at least half of the most (or least) significant bits of $p$ are given. In 2013, by using Coppersmith's method, Bernstein et al. \cite{DBLP:conf/asiacrypt/BernsteinCCCHLS13} showed that an attacker can efficiently factor 184
distinct RSA keys generated by government-issued smart cards.

At PKC 2009, May and Ritzenhofen \cite{DBLP:conf/pkc/MayR09} introduced the Implicit Factorization Problem (IFP). It concerns the question of factoring two $n$-bit RSA moduli $N _1=p_1q_1$ and $N_2=p_2q_2$, given the implicit information that $p_1$ and $p_2$ share $\gamma n$ of their consecutive least significant bits, while $q_1$ and $q_2$ are $\alpha n$-bit. Using a two-dimensional lattice, May and Ritzenhofen obtained a heuristic result that this implicit information is sufficient to factor $N_1$ and $N_2$ with a lattice-based algorithm, provided that $\gamma n  > 2\alpha n +2$.

In a follow-up work at PKC 2010, Faug{\`{e}}re \textit{et al.} \cite{DBLP:conf/pkc/FaugereMR10} generalized the Implicit Factorization Problem to the case where the most significant bits (MSBs) or the middle bits of $p_1$ and $p_2$ are shared. Specifically, they established the bound of $\gamma n > 2\alpha n + 2$ for the case where the MSBs are shared, using a two-dimensional lattice. For the case where the middle bits of $p_1$ and $p_2$ are shared, Faug{\`{e}}re \textit{et al.} obtained a heuristic result that $q_1$ and $q_2$ could be found from a three-dimensional lattice if $\gamma n > 4\alpha n + 6$.

In 2011, Sarkar and Maitra \cite{DBLP:journals/tit/SarkarM11} further expanded the Implicit Factorization Problem by revealing the relations between the Approximate Common Divisor Problem (ACDP) and the Implicit Factorization Problem, and presented the bound of $\gamma > 2\alpha -\alpha^2$ for the following three cases.
\begin{enumerate}
	\item the primes $p_1$, $p_2$ share an amount of the least significant bits (LSBs);
	\item the primes $p_1$, $p_2$ share an amount of most significant bits (MSBs);
	\item the primes $p_1$, $p_2$ share both an amount of least significant bits and an amount of most significant bits.
\end{enumerate}

In 2016, Lu \textit{et al.} \cite{lu2016towards} presented a novel algorithm and improved the bounds to $\gamma > 2\alpha -2\alpha^2$ for all the above three cases of the Implicit Factorization Problem.
In 2015, Peng \textit{et al.} \cite{DBLP:conf/iwsec/PengHLHX15} revisited the Implicit Factorization Problem with shared middle bits and improved the bound of Faug{\`{e}}re \textit{et al.} \cite{DBLP:conf/pkc/FaugereMR10} up to $\gamma>4\alpha-3\alpha^2$. The bound was further enhanced by Wang \textit{et al.} \cite{DBLP:journals/chinaf/WangQLF18} in 2018 up to $\gamma>4\alpha-4\alpha\sqrt{\alpha}$.

It is worth noting that in the previous cases, the shared bits are located at the same position for the primes $p_1$ and $p_2$.

In this paper, we present a more generalized case of the Implicit Factorization Problem that allows for arbitrary consecutive shared locations, rather than requiring them to be identical in the primes, as in previous research. More precisely, we propose the Generalized Implicit Factorization Problem (GIFP), which concerns the factorization of two $n$-bit RSA moduli $N_1=p_1q_1$ and $N_2=p_2q_2$ when $p_1$ and $p_2$ share $\gamma n$ consecutive bits, where the shared bits are not necessarily required to be located at the same positions. See Fig. \ref{fig:1} for an example, where the starting  positions for the shared bits in $p_1$ and $p_2$ may be different.

\begin{figure}[htbp]
	\subfloat[$p_1$]
	{
		\begin{minipage}{0.5\textwidth}
			\centering
			\includegraphics[scale=0.8]{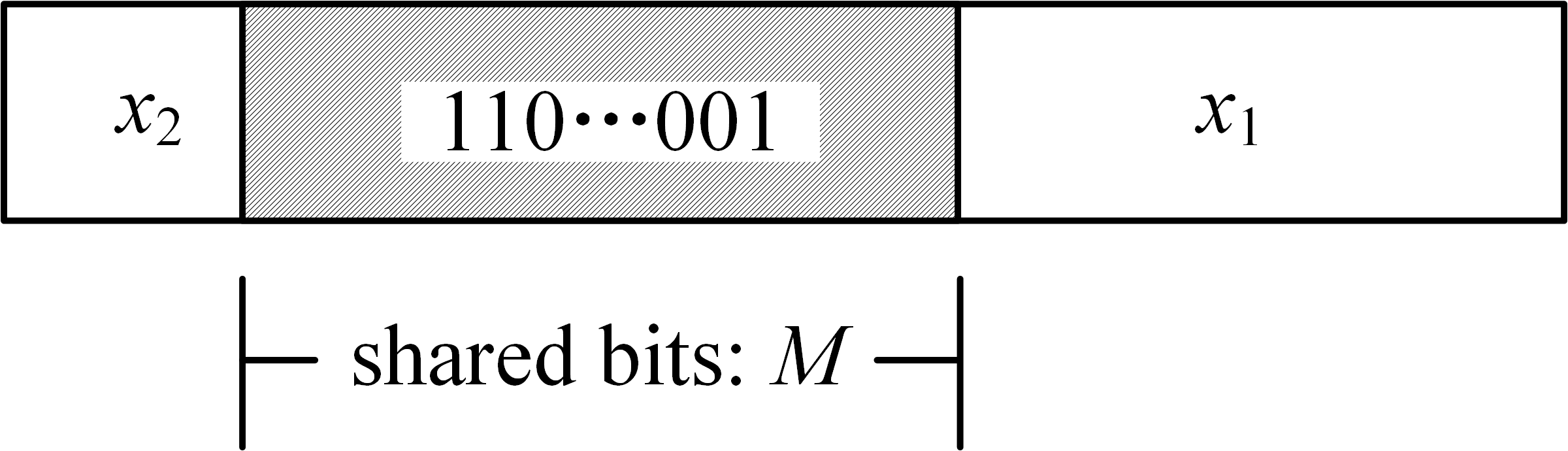}
		\end{minipage}
	}
	\subfloat[$p_2$]
	{
		\begin{minipage}{0.5\textwidth}
			\centering
			\includegraphics[scale=0.8]{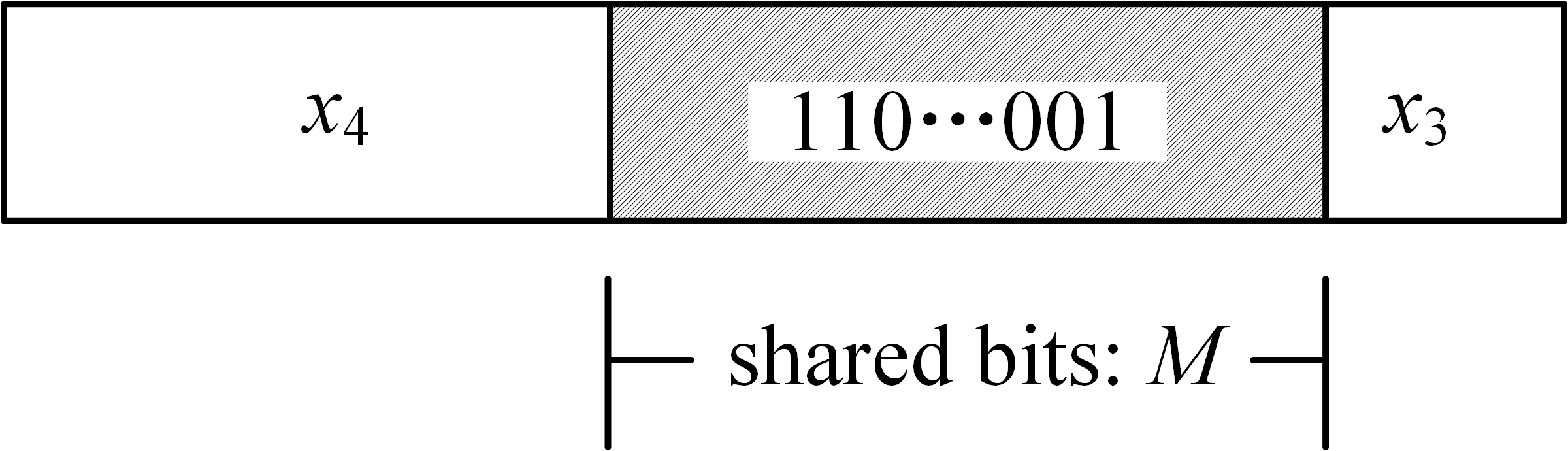}
		\end{minipage}
	}
	\caption{Shared bits $M$ for $p_1$ and $p_2$} %
 \label{fig:1}
\end{figure}
We transform the GIFP into the Approximate Common Divisor Problem and then, employ Coppersmith's method with some optimization strategy, we propose a polynomial time algorithm to solve it when $\gamma>4\alpha(1-\sqrt{\alpha})$.

In Table  \ref{Table0}, we present a comparison of our new bound on $\gamma$ with the known former bounds obtained by various methods to solve the Implicit Factorization Problem.

\begin{table*}
\small{
\centering
\begin{tabular}{cccccc}
   \toprule
& LSBs & MSBs & both LSBs-MSBs &Middle bits &General \\
   \midrule
		May, Ritzenhofen \cite{DBLP:conf/pkc/MayR09} &$2\alpha$&-&-&-&-\\
		Faug{\`{e}}re, \textit{et al.} \cite{DBLP:conf/pkc/FaugereMR10}&$2\alpha$&-&-&$4\alpha$&-\\
		Sarkar, Maitra \cite{DBLP:journals/tit/SarkarM11} & $2\alpha-\alpha^2$ &$2\alpha-\alpha^2$ & $2\alpha-\alpha^2$&-&-\\
		Lu, \textit{et al.} \cite{lu2016towards} & $2\alpha-2\alpha^2$ &$2\alpha-2\alpha^2$ & $2\alpha-2\alpha^2$&-&-\\
		Peng, \textit{et al.}\cite{DBLP:conf/iwsec/PengHLHX15} &-&-&-&$4\alpha-3\alpha^2$&-\\
		Wang, \textit{et al.}\cite{DBLP:journals/chinaf/WangQLF18} &-&-&-&$4\alpha(1-\sqrt{\alpha})$&-\\
		\bf{This work}&-&-&-&-&$4\alpha(1-\sqrt{\alpha})$\\
   \bottomrule
\end{tabular}
\caption{Asymptotic lower bound of $\gamma$ in the Implicit Factorization Problem for $n$-bit $N_1=p_1q_2$ and $N_2=p_2q_2$ where the number of shared bits is $\gamma n$,  $q_1$ and $q_2$ are $\alpha n$-bit.}
\label{Table0}
}
\end{table*}

It can be seen in Table \ref{Table0} that the bounds for the Implicit Factorization Problem for sharing middle bits are inferior to those of other variants. This is because the unshared bits in the Implicit Factorization Problem for LSBs or MSBs or both LSBs and MSBs  are continuous, and only one variable is necessary  to represent the unshared bits while at least two variables are needed to represent the unshared bits in the Implicit Factorization Problem sharing middle bits or GIFP. In addition, our bound for GIFP is identical to the variant of IFP sharing the middle bits located in the same position. However, it is obvious that the GIFP relaxes the constraints for the positions of the shared bits.

Therefore, with the same bound for the number of shared bits as in the IFP sharing middle bits at the same position, we show that the Implicit Factorization Problem can still be solved efficiently when the positions for the sharing bits are located differently.

There are still open problems, and the most important one is: can we improve our bound $4\alpha\left(1-\sqrt{\alpha}\right)$ for GIFP to $2\alpha-2\alpha^2$ or even better? A positive answer seems not easy since the bound for GIFP directly yields a bound for any known variant of IFP. Improving the bound for GIFP to the one better than $4\alpha\left(1-\sqrt{\alpha}\right)$  means that we can improve the bound for the variant of IFP sharing the middle bits located in the same position, and improving the bound for GIFP to the one better than $2\alpha-2\alpha^2$  means that we can improve the bound  for any known variant of IFP.

\subsubsection{Roadmap}
Our paper is structured as follows. Section 2 presents some required background for our approaches. In Section 3, we present our analysis of the Generalized Implicit Factorization Problem, which constitutes our main result. Section 4 details the experimental results used to validate our analysis. Finally, we provide a brief conclusion in Section 5. The source code is available at:\begin{center}
  \textcolor{blue}{\url{https://github.com/fffmath/gifp}}.
\end{center}

\section{Notations and Preliminaries}
\subsubsection{Notations}
Let $\mathbb{Z}$ denote the ring of integers, i.e., the set of all integers. We use lowercase bold letters (e.g., $\mathbf{v}$) for vectors and uppercase bold letters (e.g., $\mathbf{A}$) for matrices. The notation $\binom{n}{m}$ represents the number of ways to select $m$ items out of $n$ items, which is defined as $\frac{n!}{m!(n-m)!}$. If $m>n$, we set $\binom{n}{m}=0$.

\subsection{Lattices, SVP, and LLL}
Let $m\geq 2$ be an integer. A lattice is a discrete additive subgroup of $\mathbb{R}^m$. A more explicit definition is presented as follows.
\begin{definition}[Lattice]
Let  $\mathbf{v_1},\mathbf{v_2},\dots,\mathbf{v_n}\in \mathbb{R}^m$ be $n$ linearly independent vectors with $n\leq m$. The lattice $\mathcal{L}$ spanned by $\left\{\mathbf{v_1},\mathbf{v_2},\dots,\mathbf{v_n} \right\}$ is the set  of all integer linear combinations of $\left\{\mathbf{v_1},\mathbf{v_2},\dots,\mathbf{v_n} \right\}$, i.e.,
	$$
	\mathcal{L}=\left\{\mathbf{v}\in \mathbb{R}^m\ |\ \mathbf{v}=\sum_{i=1}^n a_i\mathbf{v_i}, a_i\in \mathbb{Z}\right\}.
	$$
\end{definition}

The integer $n$ denotes the rank of the lattice $\mathcal{L}$, while $m$ represents its dimension. The lattice $\mathcal{L}$ is said to be full rank if $n=m$. We use the matrix $\mathbf{B}\in \mathbb{R}^{n\times m}$, where each vector $\mathbf{v_i}$ contributes a row to $\mathbf{B}$. The determinant of $\mathcal{L}$ is defined as $\det(\mathcal{L})=\sqrt{\det\left(\mathbf{B}\mathbf{B}^t\right)}$, where $\mathbf{B}^t$ is the transpose of $\mathbf{B}$. If $\mathcal{L}$ is full rank, this reduces to $\det(\mathcal{L})=\left|\det\left(\mathbf{B}\right)\right|$.



The Shortest Vector Problem (SVP) is one of the famous computational problems in lattices.
\begin{definition}[Shortest Vector Problem (SVP)]
Given a lattice $\mathcal{L}$, the Shortest Vector Problem (SVP) asks to find a non-zero lattice vector $\mathbf{v}\in \mathcal{L}$ of minimum Euclidean norm, i.e., find $\mathbf{v}\in \mathcal{L}\backslash \{\mathbf{0}\}$ such that $\|\mathbf{v}\|\leq \|\mathbf{w}\|$ for all non-zero $\mathbf{w}\in \mathcal{L}$.
\end{definition}

Although SVP is NP-hard  under randomized reductions \cite{Ajtai1998TheSV}, there exist algorithms that can find a relatively short vector, instead of the exactly shortest vector,  in polynomial time, such as the famous LLL algorithm proposed by Lenstra, Lenstra, and Lovasz \cite{lenstra1982factoring}  in 1982. The following result is useful for our analysis\cite{DBLP:phd/de/May2003}.

\begin{theorem}[LLL Algorithm]\label{LLL}
Given an $n$-dimensional lattice $\mathcal{L}$, we can find an \text{LLL}-reduced basis $\left\{\mathbf{v_1},\mathbf{v_2},\dots,\mathbf{v_n}\right\}$ of $\mathcal{L}$ in polynomial time, which satisfies
	$$
	\Vert \mathbf{v_i}\Vert\leq 2^{\frac{n(n-1)}{4(n+1-i)}}\det(\mathcal{L})^{\frac{1}{n+1-i}},\quad \text{for}\quad i=1,\dots,n .
	$$
\end{theorem}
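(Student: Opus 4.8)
The plan is to establish the two assertions of the theorem in turn: that an LLL-reduced basis can be computed in polynomial time, and that any such basis obeys the displayed bounds. Recall that, writing $\mathbf{v_1^*},\dots,\mathbf{v_n^*}$ for the Gram--Schmidt orthogonalization of the $\mathbf{v_i}$ and $\mu_{i,j}=\langle\mathbf{v_i},\mathbf{v_j^*}\rangle/\langle\mathbf{v_j^*},\mathbf{v_j^*}\rangle$, a basis is \emph{LLL-reduced} when $|\mu_{i,j}|\le\tfrac12$ for all $j<i$ and $\|\mathbf{v_{i+1}^*}+\mu_{i+1,i}\mathbf{v_i^*}\|^2\ge\tfrac34\|\mathbf{v_i^*}\|^2$ for all $i$. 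The algorithm alternates a size-reduction step (subtracting suitable integer multiples of $\mathbf{v_1},\dots,\mathbf{v_{i-1}}$ from $\mathbf{v_i}$, which leaves every $\mathbf{v_i^*}$ unchanged) with, whenever the Lov\'asz condition fails at an index $i$, a swap of $\mathbf{v_i}$ and $\mathbf{v_{i+1}}$.

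For polynomial-time termination I would track the potential $D=\prod_{k=1}^{n}\det(\mathcal{L}_k)^2$ with $\mathcal{L}_k=\langle\mathbf{v_1},\dots,\mathbf{v_k}\rangle$, using that $\det(\mathcal{L}_k)^2=\prod_{j=1}^{k}\|\mathbf{v_j^*}\|^2$ is a positive integer. Size reduction fixes $D$; a swap at index $i$ occurs only when $\|\mathbf{v_{i+1}^*}+\mu_{i+1,i}\mathbf{v_i^*}\|^2<\tfrac34\|\mathbf{v_i^*}\|^2$, and it multiplies exactly the factor $\det(\mathcal{L}_i)^2$ by $\|\mathbf{v_{i+1}^*}+\mu_{i+1,i}\mathbf{v_i^*}\|^2/\|\mathbf{v_i^*}\|^2<\tfrac34$, so $D$ strictly decreases by a constant factor while the other factors are untouched. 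As $D\ge1$ always and $D$ is initially $B^{O(n^2)}$ for input entries bounded by $B$, there are $O(n^2\log B)$ swaps; routing all intermediate quantities through the integer Gram determinants $\det(\mathcal{L}_k)^2$ keeps every manipulated number of polynomial bit-size, so the whole procedure is polynomial.

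For the norm bounds I would first note that the two defining conditions force $\|\mathbf{v_{j+1}^*}\|^2\ge(\tfrac34-\mu_{j+1,j}^2)\|\mathbf{v_j^*}\|^2\ge\tfrac12\|\mathbf{v_j^*}\|^2$, which iterates to $\|\mathbf{v_k^*}\|^2\le2^{\,i-k}\|\mathbf{v_i^*}\|^2$ for $k\le i$ and to $\|\mathbf{v_i^*}\|^2\le2^{\,k-i}\|\mathbf{v_k^*}\|^2$ for $k\ge i$. Feeding the first bound and $|\mu_{i,j}|\le\tfrac12$ into $\|\mathbf{v_i}\|^2=\|\mathbf{v_i^*}\|^2+\sum_{j<i}\mu_{i,j}^2\|\mathbf{v_j^*}\|^2$ gives $\|\mathbf{v_i}\|^2\le2^{\,i-1}\|\mathbf{v_i^*}\|^2$. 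Multiplying the second bound over $k=i,\dots,n$ gives $\|\mathbf{v_i^*}\|^{2(n+1-i)}\le2^{(n-i)(n-i+1)/2}\prod_{k=i}^{n}\|\mathbf{v_k^*}\|^2$, and since $\prod_{k=1}^{i-1}\|\mathbf{v_k^*}\|^2=\det(\mathcal{L}_{i-1})^2\ge1$ for an integer lattice the product on the right is at most $\det(\mathcal{L})^2$. Combining, $\|\mathbf{v_i}\|\le2^{(n+i-2)/4}\det(\mathcal{L})^{1/(n+1-i)}$, and because $(n+i-2)(n+1-i)\le n(n-1)$ for every integer $i\in\{1,\dots,n\}$, this implies the stated inequality.

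The delicate part, I expect, is the polynomial-time claim: bounding the number of swaps via the potential is immediate, but one must also certify that size reduction does not inflate the bit-length of the $\mu_{i,j}$, which is exactly what expressing everything through the integer determinants $\det(\mathcal{L}_k)^2$ achieves. By contrast, once the two Gram--Schmidt inequalities are recorded the norm estimate is a short computation, whose only non-formal ingredient is $\det(\mathcal{L}_{i-1})^2\ge1$ — automatic for the integer lattices arising in the constructions of this paper.
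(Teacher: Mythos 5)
The paper does not prove this statement at all: it is quoted as a known result (the Lenstra--Lenstra--Lov\'asz theorem, cited from May's thesis), so there is no ``paper proof'' to compare against. Your argument is the standard textbook proof and, as far as I can check, it is correct: the potential $D=\prod_k\det(\mathcal{L}_k)^2$ drops by a factor $<\tfrac34$ per swap and is a positive integer, giving $O(n^2\log B)$ swaps; and the chain $\|\mathbf{v_i}\|^2\le 2^{i-1}\|\mathbf{v_i^*}\|^2$, $\|\mathbf{v_i^*}\|^{2(n+1-i)}\le 2^{(n-i)(n-i+1)/2}\prod_{k=i}^n\|\mathbf{v_k^*}\|^2$, together with $(n+i-2)(n+1-i)\le n(n-1)$ (equivalent to $(i-1)(i-2)\ge 0$), yields exactly the stated exponent. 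The one point worth making explicit is that both halves of your argument --- the lower bound $D\ge 1$ for termination and the step $\prod_{k=1}^{i-1}\|\mathbf{v_k^*}\|^2\ge 1$ that lets you replace $\prod_{k=i}^n\|\mathbf{v_k^*}\|^2$ by $\det(\mathcal{L})^2$ for $i\ge 2$ --- require the lattice to be an integer lattice, which the theorem as stated in the paper omits but which you correctly flag and which holds for every lattice constructed in Section~3. No gaps beyond the usual amount of detail suppressed in the bit-size analysis of the $\mu_{i,j}$, which you handle by the standard device of clearing denominators with the Gram determinants.
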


 Theorem \ref{LLL} presents the upper bounds for the norm of the $i$-th vector in the LLL-basis using the determinant of the lattice.

\subsection{Coppersmith's method}
In 1996, Coppersmith \cite{DBLP:journals/joc/Coppersmith97,DBLP:phd/de/May2003} proposed a lattice-based method for finding small solutions of univariate modular polynomial equations modulo a positive integer $M$, and another lattice-based method for finding the small roots of bivariate polynomial equations. The methods are based on finding short vectors in a lattice. We briefly sketch the idea below. More details can be found in \cite{DBLP:phd/de/May2003}.

Let $M$ be a positive integer, and $f(x_1,\dots,x_k)$ be a polynomial with integer coefficients. Suppose we want to find a small solution $(y_1,\dots,y_k)$ of the modular equation $f(x_1,\dots,x_k) \equiv 0 \pmod{M}$ with the bounds $y_i<X_i$ for $i=1,\ldots, k$.

The first step is to construct a set $G$ of $k$-variate polynomial equations such that, for each $g_i\in G$ with $i=1,\ldots, k$, we have $g_i(y_1,\dots,y_k)\equiv 0 \pmod{M}$. Then we use the coefficient vectors of $g_i(x_1X_1,\dots,x_kX_k)$, $i=1,\ldots, k$, to construct a $k$-dimensional lattice $\mathcal{L}$. Applying the LLL algorithm to $\mathcal{L}$, we get a new set $H$ of $k$ polynomial equations $h_i(x_1,\dots,x_k)$, $i=1,\ldots,k$, with integer coefficients such that $h_i(y_1,\dots,y_k)\equiv 0 \pmod{M}$. The following result shows that one can get $h_i(y_1,\dots,y_k)=0$ over the integers in some cases, where for $h(x_1,\dots,x_k)=\sum_{i_1\ldots i_k} a_{i_1\ldots i_k}x_1^{i_1}\cdots x_1^{i_k}$, the Euclidean norm  is defined   by
$
\left\Vert h(x_1,\dots,x_k)\right\Vert=\sqrt{\sum_{i_1\ldots i_k}a_{i_1\ldots i_k}^2}
$.
\begin{theorem}[Howgrave-Graham \cite{howgrave1997finding}]\label{HG}
 Let $h(x_1,\dots,x_k)\in \mathbb{Z}[x_1,\ldots ,x_k]$ be a polynomial with at most $\omega$ monomials. Let $M$ be a positive integer. If there exist $k$ integers $(y_1,\dots,y_k)$ satisfying the following two conditions:
\begin{enumerate}
\item $h(y_1,\dots,y_k)\equiv 0 \pmod{M}$, and there exist $k$ positive integers $X_1,\dots,X_k$ such that $\left|y_1\right| \leq X_1,\ldots ,\left|y_k\right| \leq X_k$,
\vspace{0.2cm}
\item $\left\Vert h(x_1X_1,\dots,x_kX_k)\right\Vert<\frac{M}{\sqrt{\omega}}$,
\end{enumerate}
then $h(y_1,\dots,y_k)=0$ holds over the integers.
\end{theorem}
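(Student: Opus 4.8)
The plan is to run the standard triangle-inequality-plus-Cauchy--Schwarz argument underlying Howgrave-Graham's result. Write $h(x_1,\dots,x_k)=\sum_{(i_1,\dots,i_k)} a_{i_1\dots i_k}\,x_1^{i_1}\cdots x_k^{i_k}$, where the sum ranges over the at most $\omega$ monomials actually occurring in $h$. The goal is to bound $|h(y_1,\dots,y_k)|$ strictly below $M$; combined with the congruence $h(y_1,\dots,y_k)\equiv 0\pmod{M}$ from condition~1, this forces $h(y_1,\dots,y_k)=0$, since $0$ is the unique integer multiple of $M$ lying in the open interval $(-M,M)$.

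For the bound, I would first apply the triangle inequality together with the hypotheses $|y_j|\le X_j$ to get
\[
|h(y_1,\dots,y_k)|=\Bigl|\sum a_{i_1\dots i_k}\,y_1^{i_1}\cdots y_k^{i_k}\Bigr|\le \sum |a_{i_1\dots i_k}|\,|y_1|^{i_1}\cdots |y_k|^{i_k}\le \sum |a_{i_1\dots i_k}|\,X_1^{i_1}\cdots X_k^{i_k}.
\]
The key observation is that the quantities $a_{i_1\dots i_k}\,X_1^{i_1}\cdots X_k^{i_k}$ are precisely the coefficients of the rescaled polynomial $h(x_1X_1,\dots,x_kX_k)$, of which there are at most $\omega$. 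Hence Cauchy--Schwarz, in the form $\sum_{j=1}^{\omega}|c_j|\le\sqrt{\omega}\,\sqrt{\sum_{j=1}^{\omega}c_j^2}$, yields
\[
\sum |a_{i_1\dots i_k}|\,X_1^{i_1}\cdots X_k^{i_k}\le \sqrt{\omega}\,\bigl\|h(x_1X_1,\dots,x_kX_k)\bigr\|.
\]
Plugging in condition~2, namely $\|h(x_1X_1,\dots,x_kX_k)\|<M/\sqrt{\omega}$, the right-hand side is strictly less than $M$, so $|h(y_1,\dots,y_k)|<M$, which completes the argument.

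There is no genuine obstacle here: the proof is elementary and self-contained. The only point deserving care is the bookkeeping in the middle step — counting monomials consistently, so that the same $\omega$ appears both as the Cauchy--Schwarz factor and in the norm bound of condition~2, and correctly identifying $a_{i_1\dots i_k}\,X_1^{i_1}\cdots X_k^{i_k}$ as the coefficient vector of $h(x_1X_1,\dots,x_kX_k)$ whose Euclidean norm condition~2 controls. Everything else reduces to the triangle inequality and the observation that an integer divisible by $M$ with absolute value below $M$ must vanish.
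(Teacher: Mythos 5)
Your proof is correct and is exactly the standard argument for Howgrave-Graham's lemma (triangle inequality, Cauchy--Schwarz over the at most $\omega$ monomials, and the observation that a multiple of $M$ of absolute value less than $M$ must be zero); the paper itself only cites this result without proof, so there is nothing further to compare against.
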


From Theorem~\ref{LLL},   we can obtain the vectors $\mathbf{v_1},\mathbf{v_2},\dots,\mathbf{v_k}$ in the LLL reduced basis of $\mathcal{L}$. This yields $k$ integer polynomials $h_1(x_1,\dots,x_k)$, $\dots$, $h_k(x_1,\dots,x_k)$, all of which share the desired solution $(y_1,\dots,y_k)$, that is $h_i(y_1,\dots,y_k)\equiv 0\pmod{M}$ for $i=1,\ldots, k$.

To combine Theorem~\ref{LLL} and Theorem~\ref{HG}, for $i=k$, we set
$$
2^{\frac{n(n-1)}{4(n+1-i)}}\det(\mathcal{L})^{\frac{1}{n+1-i}}<\frac{M}{\sqrt{\dim(\mathcal{L})}}.
$$

Ultimately, the attainment of the desired root hinges upon effectively resolving the system of integer polynomials using either the resultant method or the Gröbner basis approach. However, in order for a Gröbner basis
computation to find the common root, the following heuristic assumption needs to hold.
\begin{assumption}\label{assu::solution}
The $k$ polynomials $h_i(x_1,\cdots,x_k)$, $i=1,\cdots, k$, that are derived from the reduced basis of the lattice in the Coppersmith
method are algebraically independent. Equivalently,
the common root of the polynomials $h_i(x_1,\cdots,x_k)$
can be found by computing the resultant or computing the Gr\"{o}bner basis.
\end{assumption}
Assumption \ref{assu::solution} is often used in connection with Coppersmith's method in the multivariate scenario
 \cite{DBLP:conf/eurocrypt/BonehD99,DBLP:phd/de/May2003,DBLP:journals/tit/SarkarM11,lu2016towards,DBLP:journals/chinaf/WangQLF18}.
Since our attack in Section \ref{attack} relies on  Assumption \ref{assu::solution}, it is heuristic. However, our experiments in Section \ref{exp}
justify the validity of our attack and
show that  Assumption \ref{assu::solution} perfectly holds true.

\section{Generalized Implicit Factorization Problem}\label{attack}
This section presents our analysis of the Generalized Implicit Factorization Problem (GIFP) in which $p_1$ and $p_2$ share an amount of consecutive bits at different positions.

\subsection{Description of GIFP}
This section proposes the Generalized Implicit Factorization Problem (GIFP), which concerns the factorization of two $n$-bit RSA moduli, $N_1=p_1q_1$ and $N_2=p_2q_2$, under the implicit hint that the primes $p_1$ and $p_2$ share a specific number, $\gamma n$, of consecutive bits. 

In contrast to previous studies \cite{DBLP:conf/pkc/FaugereMR10,lu2016towards,DBLP:conf/pkc/MayR09,DBLP:journals/amco/SarkarM09,DBLP:journals/amco/SarkarM10,DBLP:journals/chinaf/WangQLF18}, where the shared bits were assumed to be located at the same positions in $p_1$ and $p_2$, the proposed GIFP considers a more general case where the shared bits can be situated at arbitrary positions.

\begin{definition}[GIFP($n, \alpha, \gamma$)]
Given two $n$-bit RSA moduli $N_1=p_1q_1$ and $N_2=p_2q_2$, where  $q_1$ and  $q_2$ are $\alpha n$-bit, assume that $p_1$ and $p_2$ share $\gamma n$ consecutive bits, where the shared bits may be located in different positions of $p_1$ and $p_2$. The Generalized Implicit Factorization Problem (GIFP) asks to  factor $N_1$ and $N_2$.
\end{definition}

The introduction of GIFP expands the scope of the Implicit Factorization Problem and presents a more realistic and challenging scenario that can arise in practical applications. In real-world settings, it is more probable to encounter situations where the shared location of bits differs between primes. Therefore, it is essential to develop algorithms and analysis that can handle such cases where the shared bits are situated at different positions. By considering the Generalized Implicit Factorization Problem (GIFP), we need to avoid situations where the system that creates RSA keys lack entropy.

\subsection{Algorithm for GIFP}
We will show our analysis of the GIFP in this subsection. The main idea is also to relate the Approximate Common Divisor Problem (ACDP) to the Implicit Factorization Problem.

\begin{theorem}
Under Assumption \ref{assu::solution}, GIFP($n, \alpha, \gamma$) can be solved in polynomial time when
$$
\gamma>4\alpha\left(1-\sqrt{\alpha}\right),
$$
provided that $\alpha+\gamma\leq 1$.

\end{theorem}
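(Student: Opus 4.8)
The plan is to reduce GIFP to an instance of the Approximate Common Divisor Problem and then to attack that with a multivariate Coppersmith lattice. Write the shared block of $\gamma n$ bits as a common integer $M$. Then there exist nonnegative integers $a_1,a_2$ (the low parts of $p_1,p_2$ below the shared block) and $b_1,b_2$ (the high parts above it) such that
$$
p_1 = b_1 2^{s_1+\gamma n} + M 2^{s_1} + a_1, \qquad p_2 = b_2 2^{s_2+\gamma n} + M 2^{s_2} + a_2,
$$
where $s_1,s_2$ are the (known or guessable up to $O(n)$ many choices) starting positions. Dividing out the common powers of two, one obtains two numbers that are both close to multiples of the unknown $M 2^{\min(s_i)}$; more precisely, after clearing the known factors $2^{s_i}$, $N_1 2^{-s_1}$ and $N_2 2^{-s_2}$ (suitably adjusted) are congruent modulo $q_1,q_2$ respectively to small quantities, and subtracting aligns them so that a single unknown divisor-like quantity $p := M 2^{t}$ (for the appropriate shift $t$) approximately divides two known numbers with error terms of size governed by the unshared bits $a_i,b_i$, which are bounded by $2^{(1-\gamma)n}$ in total, while $q_1,q_2 < 2^{\alpha n}$.

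Concretely I would set up a polynomial $f(x_1,x_2,x_3)$ (or with one more variable, depending on how the two different shift positions are handled) whose small root encodes $(q_1,q_2)$ together with the difference of the unshared parts, and whose modulus is $N_1$ (or $N_1 N_2$, or $q_1$ — one picks the formulation of Lu \emph{et al.}\ / Wang \emph{et al.}\ that is known to give the $4\alpha(1-\sqrt{\alpha})$ bound in the equal-position middle-bit case). The bounds on the unknowns are $X_1 = X_2 \approx 2^{\alpha n}$ for the $q_i$ and $X_3 \approx 2^{(1-\gamma)n}$ (or a product of two such bounds) for the unshared-bit variables; the constraint $\alpha+\gamma \le 1$ is exactly what guarantees these error terms are genuinely smaller than the modulus, i.e.\ that the ACDP instance is well-posed. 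Then build the Coppersmith lattice from shifted products $x_1^{i_1} x_2^{i_2} x_3^{i_3} f^{\,j} M^{\ell}$ over a carefully chosen index set, with an optimization parameter (a ratio $\tau$) controlling how many extra shifts in the ``small'' variable are included — this is the standard device by which $2\alpha - \alpha^2$-type bounds are pushed to $4\alpha(1-\sqrt{\alpha})$-type bounds.

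The heart of the argument, and the step I expect to be the main obstacle, is the determinant computation: one must evaluate $\det(\mathcal{L})$ and $\dim(\mathcal{L})$ as functions of the lattice parameter and $\tau$, plug them into the LLL-plus-Howgrave-Graham inequality $2^{n(n-1)/4}\det(\mathcal{L})^{1/(\dim \mathcal{L}-k+1)} < M/\sqrt{\dim\mathcal{L}}$ from Theorems~\ref{LLL} and~\ref{HG}, let the lattice dimension tend to infinity so the $2^{n(n-1)/4}$ and $\sqrt{\dim\mathcal{L}}$ factors become negligible, and then optimize over $\tau$. The optimization should yield $\tau = 1-\sqrt{\alpha}$ (this is what produces the square root), and back-substituting gives the condition $\gamma > 4\alpha(1-\sqrt{\alpha})$. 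A secondary technical point is verifying that allowing the shared bits to sit at \emph{different} positions $s_1 \ne s_2$ does not degrade the bound: after the alignment step the two shift discrepancies are absorbed into the error terms and into an extra unknown that is still bounded by $2^{(1-\gamma)n}$, so the same determinant bookkeeping goes through — this is the one place where the ``generalized'' part of GIFP must be checked carefully rather than quoted. Finally, under Assumption~\ref{assu::solution} the three (or more) polynomials returned by LLL are algebraically independent, so resultants or a Gröbner basis recover $(q_1,q_2)$, hence $\gcd(N_1,q_1\cdot\text{stuff})$ factors $N_1$ and likewise $N_2$, all in time polynomial in $n$ (the $O(n)$-fold guessing of the positions $s_1,s_2$ only multiplies the running time by a polynomial factor).
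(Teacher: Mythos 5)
Your high-level strategy matches the paper's: align the two primes, pass to an approximate-common-divisor relation modulo the unknown $p_1$ (times a known power of two), build a Coppersmith lattice from shifts of a low-degree polynomial, and optimize a parameter to land on $4\alpha(1-\sqrt{\alpha})$. The alignment step you worry about does go through essentially as you guess: the paper writes $2^{(\beta_2-\beta_1)n}p_1 = p_2 + (x_1 2^{(\beta_2-\beta_1)n}-x_3) + (x_2-x_4)2^{(\beta_2+\gamma)n}$, multiplies by $q_2$, and works with $f(x,y,z)=xz+2^{(\beta_2+\gamma)n}yz+N_2$ having the root $(x_1 2^{(\beta_2-\beta_1)n}-x_3,\,x_2-x_4,\,q_2)$ modulo $2^{(\beta_2-\beta_1)n}p_1$; the extra factor $2^{(\beta_2-\beta_1)n}$ inflates the bound $X$ on the first unknown but also inflates the modulus by the same amount, so the misalignment costs nothing. (Minor correction: only $q_2$ appears in the root, not the pair $(q_1,q_2)$, and the modulus involves the unknown divisor $p_1$ of $N_1$, boosted by multipliers $N_1^{\max(t-i,0)}$ in the shift polynomials.)

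The genuine gap is in the lattice you describe. A three-variable construction $x^{i_1}y^{i_2}z^{i_3}f^{\,j}M^{\ell}$ with a single optimization ratio $\tau$ (for the powers of $N_1$) does \emph{not} reach $4\alpha(1-\sqrt{\alpha})$: the paper states explicitly that this yields only $\gamma>2\alpha(2-\sqrt{\alpha})=4\alpha-2\alpha\sqrt{\alpha}$, which is strictly weaker. The decisive device is a \emph{fourth} variable $w$ standing for $p_2$: each shift polynomial is multiplied by $w^{s}$ and every product $zw$ is replaced by $N_2$ (then cancelled via $N_2^{-1}$ modulo the lattice modulus), which strips $\min(s,i+j)$ powers of $Z=2^{\alpha n}$ out of the determinant at the cost of a few powers of $W=2^{(1-\alpha)n}$. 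This introduces a second optimization parameter $\sigma=s/m$, and the bound $4\alpha(1-\sqrt{\alpha})$ emerges only from the joint optimization $\tau_0=1-\sqrt{\alpha}$, $\sigma_0=\sqrt{\alpha}$ of the two-parameter condition $\tau^2(3-\tau)-3(1-\alpha)\tau+\sigma^3-3\alpha\sigma+1-\gamma+\alpha<0$. Since you yourself identify the determinant computation as the heart of the proof and then leave it unexecuted with a lattice that provably gives a weaker exponent, the proposal as written does not establish the stated theorem; it would need to be completed with the $w$-substitution (or an equivalent determinant-reduction trick) and the two-parameter optimization.
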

\begin{proof}  Without loss of generality, we can assume that the starting and ending positions of the shared bits are known.  When these positions are unknown, we can simply  traverse the possible starting positions of the shared bits, which will just scale the time complexity  for the case that we know the position by a factor  $\mathcal{O}(n^2)$.

Hence, we suppose that $p_1$ shares $\gamma n$-bits from the $\beta_1 n$-th bit to $(\beta_1+\gamma) n$-th bit, and  $p_2$ shares bits from $\beta_2 n$-th bit to $(\beta_2+\gamma) n$-th bit, where $\beta_1$ and $\beta_2$ are known with $\beta_1\leq \beta_2$ (see Fig. \ref{fig:1} ). Then we can write
$$
p_1=x_1+M_02^{\beta_1 n}+x_22^{(\beta_1+\gamma) n},\quad p_2=x_3+M_02^{\beta_2 n}+x_42^{(\beta_2+\gamma) n},
$$
with $M_0<2^{\gamma n}$, $x_1<2^{\beta_1 n}$, $x_2<2^{(\beta-\beta_1) n}$, $x_3<2^{\beta_2 n}$, $x_4<2^{(\beta-\beta_2) n}$ where $\beta=1-\alpha-\gamma$. From this, we deduce
\begin{align*}
2^{(\beta_2-\beta_1) n}p_1&=x_12^{(\beta_2-\beta_1) n}+M_02^{\beta_2 n}+x_22^{(\beta_2+\gamma) n}\\
&=x_12^{(\beta_2-\beta_1) n}+(p_2-x_3-x_42^{(\beta_2+\gamma) n})+x_22^{(\beta_2+\gamma) n}\\
&=p_2+(x_12^{(\beta_2-\beta_1) n}-x_3)+(x_2-x_4)2^{(\beta_2+\gamma) n}.
\end{align*}
Then, multiplying by $q_2$, we get
$$
N_2+(x_12^{(\beta_2-\beta_1) n}-x_3)q_2+(x_2-x_4)q_22^{(\beta_2+\gamma) n}=2^{(\beta_2-\beta_1) n}p_1q_2.
$$
Next, we define the polynomial $$f(x,y,z)=xz+2^{(\beta_2+\gamma) n}yz+N_2,$$ which shows that $(x_12^{(\beta_2-\beta_1) n}-x_3,x_2-x_4,q_2)$ is a solutions of $$f(x,y,z)\equiv 0 \pmod{2^{(\beta_2-\beta_1) n}p_1}.$$
Let $m$ and $t$ be  integers to be optimized later with $0\leq t \leq m$. To apply Coppersmith's method, we consider a family of polynomials $g_{i,j}(x,y,z)$ for $0\leq i\leq m$ and $0\leq j\leq m-i$:
$$
g_{i,j}(x,y,z)=(yz)^{j}f(x,y,z)^i\left(2^{(\beta_2-\beta_1) n}\right)^{m-i}N_1^{\max(t-i,0)}.
$$
These polynomials satisfy
\begin{align*}
g_{i,j}&\left(x_1 2^{(\beta_2-\beta_1) n}-x_3,x_2-x_4,q_2\right)\\
=&\ (x_2-x_4)^{j}q_2^{j}\left(2^{(\beta_2-\beta_1) n}p_1q_2\right)^i\left(2^{(\beta_2-\beta_1) n}\right)^{m-i}N_1^{\max(t-i,0)}\\
=&\ (x_2-x_4)^{j}q_2^{j+i}q_1^{\max(t-i,0)}\left(2^{(\beta_2-\beta_1) n}\right)^mp_1^{\max(t-i,0)+i}\\
\equiv&\  0\pmod*{\left(2^{(\beta_2-\beta_1) n}\right)^mp_1^{t}}.
\end{align*}
On the other hand, we have
\begin{align*}
\left\vert x_12^{(\beta_2-\beta_1) n}-x_3\right\vert
&\leq \max\left(x_12^{(\beta_2-\beta_1) n},x_3\right)\\
&\leq \max\left(2^{\beta_1 n}2^{(\beta_2-\beta_1) n},2^{\beta_1 n}\right)\\
&=2^{\beta_2 n},
\end{align*}
and
$$
\vert x_2-x_4\vert\leq \max(x_2,x_4)=2^{(\beta-\beta_1)n}.
$$
Also, we have $ q_2 =2^{\alpha n}$.
We then set
$$
X=2^{\beta_2 n},\ Y=2^{(\beta-\beta_1) n},\ Z=2^{\alpha n}.
$$
To reduce the determinant of the lattice, we introduce a new variable $w$ for $p_2$, and multiply the polynomials $g_{i,j}(x,y,z)$ by a power $w^s$ for some $s$ that will be optimized later. Similar to $t$, we also require $0\leq s \leq m$

Note that we can replace $zw$ in $g_{i,j}(x,y,z)w^{s}$ by $N_2$. We want to eliminate this multiple. Since $\gcd(N_2, 2N_1)=1$, there exists an inverse of $N_2$, denoted as $N_2^{-1}$, such that $N_2N_2^{-1}\equiv 1 \pmod*{\left(2^{(\beta_2-\beta_1) n}\right)^mN_1^{t}}$. We then eliminate $(zw)^{\min(s,i+j)}$ from the original polynomial by multiplying it by $N_2^{-\min(s,i+j)}\pmod*{\left(2^{(\beta_2-\beta_1) n}\right)^mN_1^{t}}$, while ensuring that the resulting polynomial evaluation is still a multiple of $ \left(2^{(\beta_2-\beta_1) n}\right)^mp_1^{t}$. By selecting the appropriate parameter $s$, we aim to reduce the determinant of the lattice. To this end, we consider a new family of polynomials $G_{i,j}(x,y,z,w)$ for $0\leq i\leq m$ and $0\leq j\leq m-i$:
$$
G_{i,j}(x,y,z,w)=(yz)^{j}w^sf(x,y,z)^i\left(2^{(\beta_2-\beta_1) n}\right)^{m-i}N_1^{\max(t-i,0)}N_2^{-\min(s,i+j)},
$$
where $N_2^{-\min(s,i+j)}$ is computed modulo $\left(2^{(\beta_2-\beta_1) n}\right)^mN_1^{t}$, and each term $zw$ is replaced by $N_2$. For example, suppose $s\geq 1$, then 
$$G_{0,1}(x,y,z,w)=yw^{s-1}N_2\left(2^{(\beta_2-\beta_1) n}\right)^{m}N_1^{t}N_2^{-1}.$$
Next, consider the lattice $\mathcal{L}$ spanned by the matrix $\mathbf{B}$ whose rows are the coefficients of the polynomials $G_{i,j}(Xx,Yy,Zz,Ww)$ where, for  $0\leq i\leq m$, $0\leq j \leq m-i$,
The rows are ordered following the rule that $G_{i,j}\prec G_{i',j'}$ if $i<i'$ or if $i=i'$ and $j<j'$. 
The columns are ordered following the monomials so that $x^iy^jz^{i+j-\min(s,i+j)}w^{s-\min(s,i+j)}\prec x^{i'}y^{j'}z^{i'+j'-\min(s,i'+j')}w^{s-\min(s,i'+j')}$ if $i<i'$ or if $i=i'$ and $j<j'$.
Table \ref{TableGIFP0} presents a matrix $\mathbf{B}$ with $m=3$, $s=2$, $t=2$ where $\ast$ represents a nonzero term.

\begin{table}[h]
\begin{eqnarray*}
\tiny{
 \begin{array}{||c||c|c|c|c|c|c|c|c|c|c||}
\hline
G_{i,j}&w^2 &yw& y^2&y^3z &xw&xy&xy^2z&x^2&x^2yz&x^3z\\
  \hline\hline
G_{0, 0}&W^2M^3N_1^2&0&0&0&0&0&0&0&0&0\\
G_{0, 1}&0&YWM^3N_1^2&0&0&0&0&0&0&0&0\\
G_{0, 2}&0&0&Y^2M^3N_1^2&0&0&0&0&0& 0&0\\
G_{0, 3}&0&0&0&Y^3ZM^3N_1^2&0&0&0&0&0&0\\
G_{1, 0}&\ast&\ast&0&0&XWM^2N_1&0&0&0&0& 0\\
G_{1, 1}&0&\ast&\ast&0&0&XYM^2N_1&0&0&0&0\\
G_{1, 2}&0&0&\ast&\ast& 0&0&XY^2ZM^2N_1&0&0&0\\
G_{2,0}&\ast&\ast&\ast&0&\ast&\ast& 0&X^2M&0&0\\
G_{2,1}&0&\ast&\ast&\ast&0&\ast&\ast&0&X^2YZM& 0\\
G_{3,0}&\ast&\ast&\ast&\ast&\ast&\ast&\ast& \ast&\ast&X^3Z\\
\hline
  \end{array}
}
\end{eqnarray*}
\caption{The  matrix of the lattice  with $m=3$, $s=2$, $t=2$  and $M=2^{(\beta_2-\beta_1) n}$.}\label{TableGIFP0}
\end{table}

By construction, the square matrix $B$ is left triangular. Hence, the dimension of the lattice is
$$
\omega=\sum_{i=0}^{m}\sum_{j=0}^{m-i}1=\sum_{i=0}^{m}(m-i+1)=\frac{1}{2}(m+1)(m+2)
$$
and its determinant is
$$
\det(B)=\det(\mathcal{L})=X^{e_X}Y^{e_Y}Z^{e_Z}W^{e_W}2^{(\beta_2-\beta_1) ne_M}N_1^{e_N},
$$
with
\begin{align*}
e_X&=\sum_{i=0}^m\sum_{j=0}^{m-i}i=\frac{1}{6}m(m+1)(m+2),\\
e_Y&=\sum_{i=0}^m\sum_{j=0}^{m-i}j=\frac{1}{6}m(m+1)(m+2),\\
e_Z&=\sum_{i=0}^m\sum_{j=0}^{m-i}(i+j-\min(s,i+j))\\
&=\frac{1}{3}m(m+1)(m+2)+\frac{1}{6}s(s+1)(s+2)-\frac{1}{2}s(m+1)(m+2),\\
e_W&=\sum_{i=0}^m\sum_{j=0}^{m-i}(s-\min(s,i+j))=\frac{1}{6}s(s+1)(s+2),
\\
e_N&=\sum_{i=0}^t\sum_{j=0}^{m-i}(t-i)=\frac{1}{6}t(t+1)(3m-t+4),\\
e_M&=\sum_{i=0}^m\sum_{j=0}^{m-i}(m-i)=\frac{1}{3}m(m+1)(m+2).
\end{align*}
The former results are detailed in Appendix \ref{apdproof}. To combine Theorem~\ref{LLL} and Theorem~\ref{HG}, we set
$$
2^\frac{\omega(\omega-1)}{4(\omega+1-i)}
\det(\mathcal{L})^\frac{1}{\omega+1-i}<\frac{\left(2^{(\beta_2-\beta_1) n}\right)^mp_1^t}{\sqrt{\omega}},
$$
with $i=2$.
Then
$$
\det(\mathcal{L})<\frac{1}{2^\frac{\omega-1}{4}\sqrt{\omega}}\left(2^{(\beta_2-\beta_1) n}\right)^{\omega m}p_1^{t\omega},
$$
and
\begin{eqnarray}
X^{e_X}Y^{e_Y}Z^{e_Z}W^{e_W}2^{(\beta_2-\beta_1) ne_M}N_1^{e_N}<\frac{1}{2^\frac{\omega-1}{4}\sqrt{\omega}}\left(2^{(\beta_2-\beta_1) n}\right)^{\omega m}p_1^{t\omega}.\label{detGIFP}
\end{eqnarray}
Next, we set $s=\sigma m$ with $0\leq\sigma\leq 1$, $t=\tau m$ with $0\leq\tau\leq 1$, and we use $N\approx 2^n$, $p_1\approx 2^{(1-\alpha)n}$, $X=2^{\beta_2 n}$, $Y=2^{(\beta-\beta_1) n}$, $Z=2^{\alpha n}$, $W=2^{(1-\alpha)n}$ and the most significant parts of $e_X$, $e_Y$, $e_Z$, $e_W$, $e_N$, $e_M$ as
\begin{align*}
e_X&=\frac{1}{6}m^3+o\left(m^3\right),\\
e_Y&=\frac{1}{6}m^3+o\left(m^3\right),\\
e_Z&=\frac{1}{3}m^3+\frac{1}{6}\sigma^3m^3-\frac{1}{2}\sigma m^3+o\left(m^3\right),\\
e_W&=\frac{1}{6}\sigma^3m^3+o\left(m^3\right),\\
e_N&=\frac{1}{6}\tau^2(3-\tau)m^3+o\left(m^3\right),\\
e_M&=\frac{1}{3}m^3+o\left(m^3\right).
\end{align*}
Similarly, we use
$$
m\omega=\frac{1}{2}m^3+o\left(m^3\right).
$$
Then, after taking logarithms,  dividing by $nm^3$,  and neglecting the very small terms, i.e., $o\left(m^3\right)$, the inequality (\ref{detGIFP}) implies
\begin{align*}
&\frac{1}{6}\beta_2+\frac{1}{6}(\beta-\beta_1)+\alpha(\frac{1}{3}+\frac{1}{6}\sigma^3-\frac{1}{2}\sigma)+\frac{1}{6}\sigma^3(1-\alpha)+\frac{1}{3}(\beta_2-\beta_1)+\frac{1}{6}\tau^2(3-\tau)\\
&<\ \frac{1}{2}(\beta_2-\beta_1)+\frac{1}{2}(1-\alpha)\tau.
\end{align*}
Using $\beta=1-\alpha-\gamma$, the former inequality is equivalent to
$$
\tau^2(3-\tau)-3(1-\alpha)\tau+\sigma^3-3\alpha\sigma+1-\gamma+\alpha<0.
$$
The left side is optimized for $\tau_0=1-\sqrt{\alpha}$ and $\sigma_0=\sqrt{\alpha}$, which gives
$$
3\alpha-2\alpha\sqrt{\alpha}-1-2\alpha\sqrt{\alpha}+1+\alpha-\gamma<0,
$$
and finally
$$
\gamma>4\alpha\left(1-\sqrt{\alpha}\right).
$$
By Assumption \ref{assu::solution}, we can get $(x_0,y_0,z_0)=(x_12^{(\beta_2-\beta_1) n}-x_3,x_2-x_4,q_2)$, so we have $q_2=z_0$, and we calculate $$
p_2=\frac{N_2}{q_2}.$$
Next, we have
$$
2^{(\beta_2-\beta_1) n}p_1=p_2+(x_12^{(\beta_2-\beta_1) n}-x_3)+(x_2-x_4)2^{(\beta_2+\gamma) n}
=
p_2+y_0+z_02^{(\beta_2+\gamma) n}.
$$
Therefore, we can calculate $p_1$ and $q_1=\frac{N_1}{p_1}$.
This terminates the proof.\qed
\end{proof}

\section{Experimental Results}\label{exp}

We provide some experiments to verify Assumption \ref{assu::solution} and the correctness of our analysis. We provide an efficient open source implementation of our algorithm for identifying ideal lattices in SageMath. The source code is available at: \textcolor{blue}{\url{https://github.com/fffmath/gifp}}.

The experiments were run on a computer configured with AMD Ryzen 5 2500U with Radeon Vega Mobile Gfx (2.00 GHz). We selected the parameter $n=\log(N)$ using gradients, validated our theory starting from small-scale experiments, and continually increased the scale of our experiments. The results are presented in Table \ref{ex1}:
\begin{table*}
\centering
\begin{tabular}{cccccccccc}
   \toprule
   n&$\alpha n$&$\beta n$&$\beta_1 n$&$\beta_2 n$&$\gamma n$&$m$&$\dim(\mathcal{L})$&$ \text{Time for LLL(s)}$&$  \text{Time for Gr\"{o}bner Basis(s)} $\\
   \midrule
    200&20&40&20&30&140&6&28&1.8620&0.0033\\
    500&50&100&50&75&350&6&28&3.1158&0.0043\\
    500&50&150&50&75&300&6&28&4.23898&0.0048\\
    1000&100&200&100&150&700&6&28&8.2277&0.0147\\
   \bottomrule
\end{tabular}
\caption{Some experimental results for the GIFP.}\label{ex1}
\end{table*}

As can be seen from Table \ref{ex1}, we chose various values of $n$, $\alpha n$, $\beta n$, $\beta_1 n$, $\beta_2 n$ and $\gamma n$ to investigate the behavior of our proposed algorithm. For each set of parameters, we recorded the time taken by the LLL algorithm and Gr\"{o}bner basis algorithm to solve the Generalized Integer Factorization Problem (GIFP).

It is important to note that our paper introduces a new variable, 'w', to eliminate some 'z'. Introducing multiple variables may intuitively make it more challenging to satisfy Assumption~\ref{assu::solution}. However, in practice, it is not necessary to satisfy Assumption~\ref{assu::solution} to find the desired 'p' and 'q'. 

For example, we usually yield $yz-C=0$ for some constant $C$. Then we can calculate $z_0$ by $z_0=q_2=\gcd(y_0z_0, N_2)=\gcd(C, N_2)$.

At the same time, if we abandon the introduction of 'w', the corresponding bound changes from $\gamma>4a(1-\sqrt{\alpha})$ to $\gamma>2a(2-\sqrt{\alpha})$. Even with only three variables in this case, we can still find 'p' and 'q' without satisfying Assumption~\ref{assu::solution}.

As the size of the problem increases, the computation time for LLL and Gr\"{o}bner basis algorithms also increases. Nevertheless, our algorithm's time complexity grows moderately compared to the problem size. Therefore, we can conclude that our algorithm is suitable for practical applications in the Generalized Integer Factorization Problem (GIFP).

Besides the Generalized Implicit Factoring Problem, we also conducted experiments on a special case, called the \textit{least-most significant bits case} (LMSBs). This case is characterized by $\beta_1=0$ and $\beta_2=\beta$. The results of these experiments are outlined below.
\begin{table*}
\centering
\begin{tabular}{cccccccc}
   \toprule
   n&$\alpha n$&$\beta n$&$\gamma n$&$m$&$\dim(\mathcal{L})$&$ \text{Time for LLL(s)}$&$  \text{Time for Gr\"{o}bner Basis(s)} $\\
   \midrule
256&25&75&156&5&21&1.3068&0.0029\\
256&25&75&156&5&21&1.2325&0.0023\\
256&25&75&156&6&21&1.2931&0.0023\\
512&50&150&212&6&28&2.0612&0.0028\\
512&50&150&212&6&28&2.4889&0.0086\\
512&50&150&212&6&28&2.0193&0.0022\\
   \bottomrule
\end{tabular}
\caption{Some experimental results for the LMSBs case.}\label{ex2}
\end{table*}

\section{Conclusion and Open Problem}
In this paper, we considered the Generalized Implicit Factoring Problem (GIFP), where the shared bits are not necessarily required to be located at the same positions. We proposed a lattice-based algorithm that can efficiently factor two RSA moduli, $N_1=p_1q_1$ and $N_2=p_2q_2$, in polynomial time, when the primes share a sufficient number of bits.

Our analysis shows that if $p_1$ and $p_2$ share $\gamma n > 4\alpha\left(1-\sqrt{\alpha}\right)n$ consecutive bits, not necessarily at the same positions, then $N_1$ and $N_2$ can be factored in polynomial time. However, this bound is valid when $p_i$ and $q_i$, $i=1,2$, are not assumed to have the same bit length, i.e., $N_1$ and $N_2$ are unbalanced moduli \cite{DBLP:journals/iacr/NitajA14}.

So our work raises an open question on improving the bound  $4\alpha\left(1-\sqrt{\alpha}\right)$, which would lead to better bounds for specific cases such as sharing some middle bits. It is known that the unshared bits in the Most Significant Bits (MSBs) or the Least Significant Bits (LSBs) are continuous, and only one variable is required when using variables to represent the unshared bits. This makes the MSBs or LSBs case easier to solve than the generalized case and achieves a better bound of $2\alpha\left(1-\alpha\right)$. However, the bound of the MSBs is not linear with the bound of the GIFP, which is unnatural. We hope that the gap between the bounds of the MSBs or LSBs and the GIFP case can be reduced.

\section*{Acknowledgement}
The authors would like to thank the reviewers of SAC 2023 for their helpful comments and suggestions. Yansong Feng and Yanbin Pan were supported in part by National Key Research and Development Project (No. 2018YFA0704705), National Natural Science Foundation of China (No. 62032009, 12226006) and
Innovation Program for Quantum Science and Technology under Grant 2021ZD0302902.

\nocite{*}
\bibliographystyle{alpha}
\bibliography{mybibliography}
\appendix
\section{Details of calculations in Section 3.2} \label{apdproof}
In this appendix, we present the details of calculations for the quantities $e_X$, $e_Y$, $e_Z$, $e_W$, $e_N$, and $e_M$ used in Section 3.2. We begin by a lemma that will be easily proven by induction. This lemma is well-known and can be found in many textbooks and references on combinatorics and discrete mathematics, such as Table 174 on page 174 in \cite{10.5555/562056}.

\begin{lemma}
    The equation $\sum_{i=0}^n\binom{i}{2}=\binom{n+1}{3}$ holds for any integer $n$.
\end{lemma}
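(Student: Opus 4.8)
The plan is to prove the identity by induction on $n$, using the convention fixed in the notation section that $\binom{i}{2}=0$ whenever $i<2$, together with Pascal's rule $\binom{a}{b-1}+\binom{a}{b}=\binom{a+1}{b}$.

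For the base case I would take $n=0$: the left-hand side is $\binom{0}{2}=0$ and the right-hand side is $\binom{1}{3}=0$, so both vanish and the identity holds. Equivalently one could anchor the induction at $n=2$, where both sides equal $1$; the cases $n=0,1$ are vacuous since every summand is zero.

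For the inductive step, assume $\sum_{i=0}^{n}\binom{i}{2}=\binom{n+1}{3}$ for some $n\geq 0$. Then
$$
\sum_{i=0}^{n+1}\binom{i}{2}=\left(\sum_{i=0}^{n}\binom{i}{2}\right)+\binom{n+1}{2}=\binom{n+1}{3}+\binom{n+1}{2}=\binom{n+2}{3},
$$
where the last equality is Pascal's rule applied with $a=n+1$, $b=3$. This closes the induction and establishes the lemma.

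There is no genuine obstacle in this argument; the only point that needs a moment of care is the bookkeeping of the boundary conventions for binomial coefficients with a small upper argument, which is already settled by the notation section. As an alternative one could give a direct combinatorial proof: $\binom{n+1}{3}$ counts the $3$-element subsets $\{a<b<c\}$ of $\{1,\dots,n+1\}$, and partitioning these subsets according to the value of the largest element $c$ yields exactly $\sum_{c=1}^{n+1}\binom{c-1}{2}=\sum_{i=0}^{n}\binom{i}{2}$, which is the claimed identity. I would present the inductive proof as the main argument, since the lemma statement already advertises it as an easy induction, and mention the combinatorial interpretation only as a remark.
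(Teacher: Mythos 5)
Your proof is correct and matches the paper's intended approach: the paper simply asserts the lemma "will be easily proven by induction" (citing it as the hockey-stick identity from standard references), and your induction via Pascal's rule, with the base case handled by the paper's convention $\binom{n}{m}=0$ for $m>n$, is exactly that argument carried out. The combinatorial remark is a fine optional addition but not needed.
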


Moving on, we provide the calculations for $e_X$ as:
\begin{align*}
e_X&=\sum_{i=0}^m\sum_{j=0}^{m-i}j
=\sum_{i=0}^m\binom{m-i+1}{2}
=\sum_{i=0}^m\binom{i+1}{2}\\
&=\binom{m+2}{3}
=\frac{1}{6}m(m+1)(m+2).
\end{align*}

The calculation of $e_Y$ is the same as $e_X$.

Next, we provide the calculation for $e_Z$:
\begin{align*}
e_Z&=\sum_{i=0}^m\sum_{j=0}^{m-i}(i+j-\min(s,i+j))\\
&=\sum_{i=0}^m\sum_{j=0}^{m-i}\max\{i+j-s,0\}\\
&=\sum_{t=s+1}^m\sum_{j=0}^{t}(t-s)\quad\text{(Let }t=i+j\text{)}\\
&=\sum_{t=s+1}^{m}(t-s)(t+1)\\
&=\sum_{t=0}^{m}(t-s)(t+1)-\sum_{t=0}^{s}(t-s)(t+1)\\
&=\sum_{t=0}^{m}t(t+1)-\sum_{t=0}^{m}s(t+1)-\sum_{t=0}^{s}t(t+1)+\sum_{t=0}^{s}s(t+1)\\
&=2\sum_{t=0}^{m}\binom{t+1}{2}-s\sum_{t=0}^{m}(t+1)-2\sum_{t=0}^{s}\binom{t+1}{2}+s\sum_{t=0}^{s}(t+1)\\
&=2\binom{m+2}{3}-s\binom{m+2}{2}+\frac{1}{6}\binom{s+2}{3}\\
&=\frac{1}{3}m(m+1)(m+2)+\frac{1}{6}s(s+1)(s+2)-\frac{1}{2}s(m+1)(m+2).
\end{align*}

Then, we provide the calculation for $e_W$:
\begin{align*}
e_W&=\sum_{i=0}^m\sum_{j=0}^{m-i}(s-\min(s,i+j))\\
&=\sum_{i=0}^s\sum_{j=0}^{s-i}(s-i-j)\\
&=\sum_{i=0}^s\sum_{j=0}^{s-i}s-\sum_{i=0}^s\sum_{j=0}^{s-i}i-\sum_{i=0}^s\sum_{j=0}^{s-i}j\\
&=\frac{1}{2}s(s+1)(s+2)-\frac{1}{6}s(s+1)(s+2)-\frac{1}{6}s(s+1)(s+2)\\
&=\frac{1}{6}s(s+1)(s+2).
\end{align*}
Furthermore, we provide the calculation for $e_N$:
\begin{footnotesize}
\begin{align*}
e_N&=\sum_{i=0}^t\sum_{j=0}^{m-i}(t-i)=\sum_{i=0}^t(t-i)(m-i+1)=\sum_{i=0}^t(t-i)(m+2-i-1)\\
&=(m+2)\sum_{i=0}^t(t-i)-\sum_{i=0}^t(t-i)(i+1)=(m+2)\binom{t+1}{2}-\sum_{i=0}^{t}t(i+1)+\sum_{i=0}^t i(i+1)\\
&=(m+2)\binom{t+1}{2}-t\binom{t+2}{2}+\sum_{i=0}^{t}2\binom{i+1}{2}=(m+2)\binom{t+1}{2}-t\binom{t+2}{2}+2\binom{t+2}{3}\\
&=\frac{1}{6}t(t+1)(3m-t+4).
\end{align*}
\end{footnotesize}
Finally, we provide the calculation for $e_M$:
\begin{align*}
e_M&=\sum_{i=0}^m\sum_{j=0}^{m-i}(m-i)=\sum_{i=0}^m(m-i+1)(m-i)=\sum_{i=0}^{m}2\binom{m-i+1}{2}\\&=\sum_{i=0}^{m}2\binom{i+1}{2}=2\binom{m+2}{3}=\frac{1}{3}m(m+1)(m+2).
\end{align*}
\end{document}